\theoremstyle{definition}
\theoremstyle{definition}
\newtheorem{dfn}{Definition}
\newtheorem{lemma}{Lemma}
\newtheorem{theorem}{Proposition}
\newtheorem{corollary}{Corollary}
\newtheorem{remark}{Remark}
\DeclareMathOperator*{\argmin}{arg\,min}
\begin{document}
\title{Differentially Private AirComp Federated Learning with Power Adaptation Harnessing Receiver Noise}

\author{
	\IEEEauthorblockN{
		\normalsize Yusuke Koda\IEEEauthorrefmark{2},
		\normalsize Koji Yamamoto\IEEEauthorrefmark{3},
		\normalsize Takayuki Nishio, and
		\normalsize Masahiro Morikura
	}
	\IEEEauthorblockA{
		\small Graduate School of Informatics, Kyoto University,
		Yoshida-honmachi, Sakyo-ku, Kyoto 606-8501, Japan
	}
	\IEEEauthorblockA{
		\IEEEauthorrefmark{2}
		koda@imc.cce.i.kyoto-u.ac.jp
		\IEEEauthorrefmark{3}kyamamot@i.kyoto-u.ac.jp
	}
}

\maketitle
\begin{abstract}
	Over-the-air computation (AirComp)-based federated learning (FL) enables low-latency uploads and the aggregation of machine learning models by exploiting simultaneous co-channel transmission and the resultant waveform superposition.
	This study aims at realizing secure AirComp-based FL against various privacy attacks where malicious central servers infer clients' private data from aggregated global models.
	To this end, a differentially private AirComp-based FL is designed in this study, where the key idea is to harness receiver noise perturbation injected to aggregated global models inherently, thereby preventing the inference of clients' private data.
	However, the variance of the inherent receiver noise is often uncontrollable, which renders the process of injecting an appropriate noise perturbation to achieve a desired privacy level quite challenging.
	Hence, this study designs transmit power control across clients, wherein the received signal level is adjusted intentionally to control the noise perturbation levels effectively, thereby achieving the desired privacy level.
	It is observed that a higher privacy level requires lower transmit power, which indicates the tradeoff between the privacy level and signal-to-noise ratio (SNR).
	To understand this tradeoff more fully, the closed-form expressions of SNR (with respect to the privacy level) are derived, and the tradeoff is analytically demonstrated.
	The analytical results also demonstrate that among the configurable parameters, the number of participating clients is a key parameter that enhances the received SNR under the aforementioned tradeoff.
	The analytical results are validated through numerical evaluations.
\end{abstract}
\IEEEpeerreviewmaketitle
%
%
\section{Introduction}
\label{sec:intro}
To perform intelligent big-data analysis driven using client-generated data from mobile devices while preserving the privacy of the clients, federated learning (FL)\cite{mcmahan2016communication, kairouz2019advances} has attracted considerable attention.
FL is a distributed machine-learning (ML) technique that enables clients to learn a shared ML model collaboratively while keeping their local data on their devices.
Specifically, each client trains an ML model locally and uploads the model parameters instead of uploading their local data to a central server.
The central server aggregates the model parameters and forms a global model.
In FL, clients are not required to expose their private data to the central servers, which is beneficial in terms of client privacy. 
This is unlike conventional MLs that acquire and store the training data in central servers, wherein clients' private data can be exposed to the central servers.

However, despite these promising benefits, FL involves the following two challenges:
First, uploading model parameters requires connectivity of many clients via wireless multiple-access channel with limited radio resources, resulting in considerable upload latencies\cite{mcmahan2016communication, kairouz2019advances}.
Second, despite preventing the exposure of clients' private data, FL still faces the risk of privacy leakage.
This happens because the model parameters uploaded by clients may still be informative\cite{carlini2018secret, melis2019exploiting, fredrikson2015model}, and the clients' private data may be inferred by malicious central servers from the aggregated global model parameters\cite{fredrikson2015model}.
These challenges call for the design of low-latency multiple-access schemes that also preserve privacy of clients in the sense that client private data cannot be inferred by such inference attacks.

Regarding the first issue of upload latency over wireless multiple-access channels, over-the-air computation (AirComp) has emerged as a promising approach\cite{zhu2019broadband,yang2020federated, cao2019optimal, wen2019reduced, jiang2019over}.
AirComp typically computes the desired functions by exploiting the wireless superposition nature yielded from the simultaneous transmission of analog-modulated signals.
This simultaneous transmission leads to dramatic latency reduction when compared to the conventional orthogonal access techniques.
In the FL context, the aggregation of model parameters transmitted by clients can be performed using AirComp.
Typically, these existing studies aim at minimizing the noise perturbation injected to the computation results of global model parameters.
For example, this goal can be achieved by enhancing the received signal-to-noise ratio (SNR) using user scheduling \cite{zhu2019broadband} or by minimizing the mean squared error (MSE) between the desired computation results and received symbols through transmit power control\cite{cao2019optimal}, analog beamforming\cite{yang2020federated}, multiple-input--multiple-output (MIMO) beamforming\cite{wen2019reduced}, or phase shifting by reflecting surfaces\cite{jiang2019over}.
However, the minimization of noise perturbation and the consequent accurate estimation of the desired computation results still include risks from the aforementioned inference attacks performed by malicious central servers in base stations (BSs) or access points in the AirComp-based FL.

To tackle these two challenges jointly, we aim to design an AirComp-based FL that is secure against the aforementioned inference attacks.
This goal is achieved by harnessing the perturbation from the inherent receiver noises, instead of minimizing the noise perturbations, thereby preserving differential privacy\cite{dwork2014algorithmic}. 
In a nutshell, differential privacy is a privacy notion quantified by the difference in possible outcomes of data aggregations (e.g., calculating an arithmetic mean) performed with or without each individual's data. 
The smaller difference indicates that each individual's contribution is smaller, which makes it harder for adversaries to infer data of each individual.
Hence, the smaller difference is interpreted as a higher privacy level.
In the FL context, designing a differentially private model aggregation with a higher privacy level makes it harder for malicious model aggregators to infer clients local models, resulting in the privacy protection from the aforementioned inference attacks\cite{mcmahan2017learning, geyer2017differentially, hao2019towards, truex2019hybrid}.
Recent studies designed a differentially private FL in non-AirComp by injecting artificial noises to aggregated global model parameters.
However, unlike these studies, our objective is to design an AirComp-based FL that preserves differential privacy by using the inherent receiver noises.
Therein, the differential privacy level depends on the configurable wireless parameters, e.g., transmit powers, and hence, designing such parameters is necessary, which is an essential difference from designing differential private FL in non-AirComp.

The main contributions of this paper are as follows:

\begin{itemize}
	\item We design AirComp-based FL that achieves data security against the aforementioned privacy inference attacks, based on the following two ideas.
	The first idea is preserving a higher level of differential privacy\cite{dwork2014algorithmic} by harnessing the perturbation from receiver noises inherently injected into aggregated global models.
	However, utilizing receiver noises poses the challenge of injecting an appropriate noise perturbation level to achieve a desired privacy level, because the variance of receiver noises is often uncontrollable.
	Motivated by this, the second idea is to intentionally adjust the received signal level by controlling the transmit powers, thereby controlling the perturbation levels effectively so that the desired privacy level is achieved.
	Through numerical evaluations, we demonstrate that the designed transmit power control achieves a higher privacy level compared to a conventional power control\cite{zhu2019broadband} while exhibiting comparable training performance.
	\item To obtain a fuller understanding of the performance of the designed AirComp-based FL, we derive a closed-form expression that presents the relationship between the essential performance metrics: received SNR and differential privacy level.
	The analytical results demonstrate the following two facts: (i) there is a challenging tradeoff between the received SNR and differential privacy level; (ii) under a constraint wherein a higher privacy level is desired, the number of participating clients will be a major configurable parameter that can enhance the received SNR.
	The analytical results are verified through numerical evaluations.
\end{itemize}

The remainder of this paper is organized as follows:
In Section~\ref{sec:system_model}, we provide a system model.
In Section~\ref{sec:tp_design}, we design a transmit power control scheme to preserve the differential privacy with a target privacy level in the AirComp-based FL model aggregation.
In Section~\ref{sec:snr_analysis}, we analytically derive the received SNR and investigate the tradeoffs between the received SNR and privacy level.
In Section~\ref{sec:numerical_evaluation}, we present a numerical evaluation to verify our analytical results.
In Section~\ref{sec:conclusion}, we present our concluding remarks.

\section{System Model}
\label{sec:system_model}

\begin{figure}[t]
	\centering
	\includegraphics[width=\columnwidth]{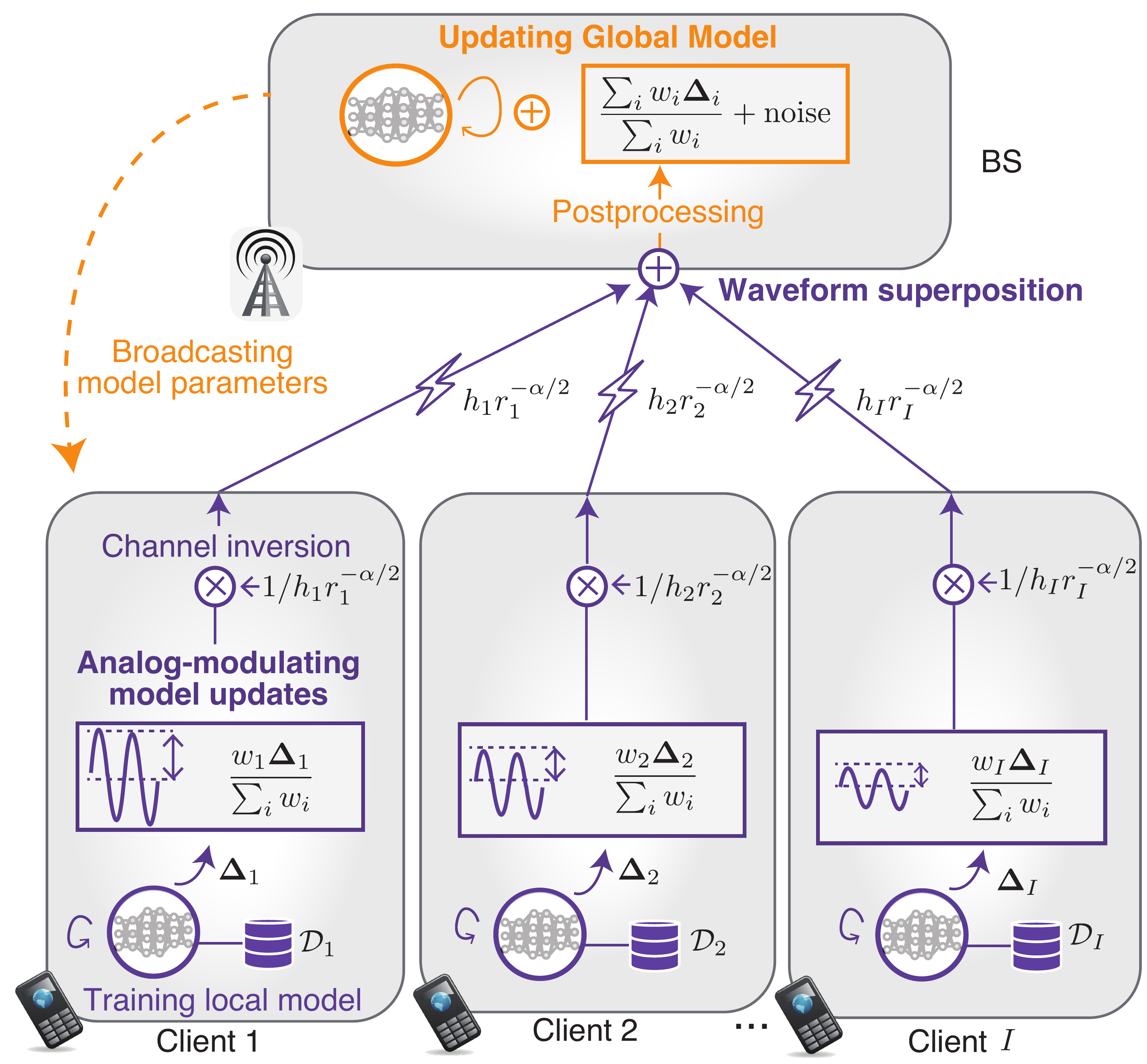}
	\caption{
		AirComp-based FL\cite{zhu2019broadband, yang2020federated}. 
		The weighted average of model updates $\bm{\Delta}_i$ for $i = 1, 2, \dots I$ is computed by exploiting the superposition of analog-modulated wireless signals.
		Prior to transmitting amplitude-modulated model updates, wireless channel inversion is performed at the client side.
	}
	\label{fig:over_the_air_computation}
	\vspace{-1.5em}
\end{figure}

\subsection{Federated Learning}
We consider an FL system comprising $I$ clients and one BS.
A shared model represented by parameters $\bm{\theta}$ is trained cooperatively across all clients with their own local datasets, each of which is denoted by $\mathcal{D}_i$, where $i\in\mathcal{I}\coloneqq\{1, \dots, I\}$ denotes the index of clients
\cite{mcmahan2016communication}.
The objective of this learning is to train the parameters $\bm{\theta}$ to minimize the model error, termed loss function, thereby, obtaining a good approximation for the target labels:
\begin{align}
	\label{eq:learning_obj}
	\bm{\theta}^{\star} = \argmin_{\bm{\theta}}\frac{1}{|\mathcal{D}|}\sum_{(x_j, y_j)\in \mathcal{D}}f(x_j, y_j; \bm{\theta}),
\end{align}
where $\mathcal{D}\coloneqq \cup_{i}\mathcal{D}_i$.
In \eqref{eq:learning_obj}, $f(x_j, y_j; \bm{\theta})$ denotes the loss function with respect to the samples $(x_j, y_j)\in\mathcal{D}$ quantifying the error between the model output and the target label $y_j$ with respect to the input sample $x_j$.
The straightforward approach to learn $\bm{\theta}^\star$ is to gather the local datasets in the BS and minimize $\sum_{(x_j, y_j)\in \mathcal{D}}f(x_j, y_j; \bm{\theta})/{|\mathcal{D}|}$ with respect to the parameters $\bm{\theta}$ using, for example, the stochastic gradient decent method.
However, this exposes the local datasets to the BS, which is not desirable owing to privacy concerns.

Alternatively, FL can learn $\bm{\theta}^\star$ in a distributed manner keeping the local datasets in each client\cite{mcmahan2016communication}.
Specifically, FL minimizes the \textit{local loss function} defined as $F_j(\bm{\theta}) \coloneqq \sum_{(x_j, y_j)\in \mathcal{D}_i}f(x_j, y_j; \bm{\theta}) / {|\mathcal{D}_i|}$ and integrates the learned parameter updates  by, for example, taking a weighted average of the parameters updates.
The detailed procedure is as follows:
First, the BS distributes the shared parameters $\bm{\theta}_{\mathrm{global}}$ across all clients.
Subsequently, each client computes the parameter updates $\bm{\Delta}_{i}$ to minimize the local loss function.
Finally, the BS integrates the parameter updates uploaded by the clients by, for example, taking a weighted average as\cite{mcmahan2016communication}:
\begin{align}
	\bm{\theta}_{\mathrm{global}} \leftarrow \bm{\theta}_{\mathrm{global}} + \frac{\sum_{i\in \mathcal{I}} w_i \bm{\Delta}_{i}}{\sum_{i\in \mathcal{I}}w_i},
\end{align}
where $w_i\in\mathbb{R}^{+}$ for $i = 1, 2, \dots, I$ denotes the weights associated to client~$i$.
Typically, the weights are set to be the data size of each client, i.e., $|\mathcal{D}_i|$ for $i = 1, 2, \dots, I$\cite{mcmahan2016communication}.
This procedure is termed \textit{round} and is iterated until the terminate conditions are satisfied, e.g., model performance converges or a predefined number of rounds has been reached.

\subsection{AirComp for FL Model Aggregation}
\label{system_model:over_the_air_computation}
In AirComp-based FL, the BS performs model aggregation by exploiting the superposition nature of the wireless signals transmitted by all clients, as illustrated in Fig.~\ref{fig:over_the_air_computation}.
Let the local update computed by client $i$ in a typical round be denoted as $\bm{\Delta}_i = \bigl[\Delta^{(1)}_i, \dots, \Delta^{(D)}_i\bigr]^{\mathrm{T}}$.
We consider that transmission time is slotted as in time division multiple access (TDMA) channel in a cellular network, and that each element of the local update is transmitted in each time slot.
The clients modulate the symbol $s^{(d)}_i\coloneqq w_i \Delta^{(d)}_i / \sum_{i\in\mathcal{I}} w_i$ for $d\in\{1, 2,  \dots, D\}$ with analog-amplitude modulation, wherein modulated symbols are constant over a time slot.
We also consider that all clients perform a slot-level synchronization through a synchronization channel as in timing advance in LTE systems\cite{timing_advance}.
Therein, all clients transmit the modulated signals simultaneously to exploit the superposition nature of the wireless signals.
The modulated signal in the client $i$ is given by
\begin{align}
	\label{eq:transmitted}
	y_i^{(d)} = \sqrt{2}s^{(d)}_ib^{(d)}_i\cos(2\pi ft),
\end{align}
where $f$ denotes the central frequency of the modulated signal and $b^{(d)}_i$ denotes the power control policies to prohibit the transmit power from exceeding the predefined maximum value of $P_0$.
Hereinafter, we consider the computation for one time slot and skip the notation indicating the time slot $(d)$, for ease of notation.

We consider that the modulated signal is transmitted via a Rayleigh fading and additive white Gaussian noise channel, and that it is detected by a matched-filter procedure\cite{goldsmith} for the transmitted signal in \eqref{eq:transmitted}.
Given the distance between the client~$i$ and BS, $r_i$, the product of the transmit and receive antenna gains, $G$, and the path loss exponent $\alpha$, the received symbol is given by 
\begin{align}
	\label{eq:received_signal}
	r = \sum_{i\in \mathcal{I}} \sqrt{\beta}r^{-{\alpha}/{2}}_i \sqrt{G} h_i  b_i s_i + n,
\end{align}
where $\beta$ is the path loss for a reference unit distance, $n\sim\mathcal{CN}(0, \sigma^2_{\mathrm{n}})$ is the receiver noise, and  $h_i\sim \mathcal{CN}(0, 1)$ is the fading channel gain.

The client-side transmission power strategy $b_i$ is given such that we obtain the desired computation $\sum_{i\in\mathcal{I}}s_i$ from the received symbol.
The solution is the channel inversion preprocessing with the maximum power constraint $|b_i s_i|^2\leq P_0$\cite{zhu2019broadband}.
In the channel inversion, $b_i$ is set as $\sqrt{\rho}/(r_i^{-\alpha/2}h_i)$, where $\rho$ is constant for all clients and is termed the power-scaling factor.
Considering that the maximum transmit power constraint should be satisfied across all clients, the power-scaling factor $\rho$ should be\cite{zhu2019broadband}
\begin{align}
	\label{eq:channel_inversion}
	\rho \leq \min_{i\in\mathcal{I}} \frac{r^{-\alpha}_i |h_i|^2}{|s_i|^2}P_0.
\end{align}
Given the channel inversion, the received symbol at the BS results in
\begin{align}
	\label{eq:receive_signal2}
	r = \sqrt{G\beta\rho}\sum_{i\in \mathcal{I}} s_i + n_0.
\end{align}
We take the real part of the received symbol and multiply the inversion of $\sqrt{G\beta\rho}$ to obtain the desired model update aggregation $\sum_{i\in\mathcal{I}}s_i$.
The computation results are as follows:
\begin{align}
	\label{eq:computation_result_original}
	\sum_{i\in\mathcal{I}} s_i + \frac{1}{\sqrt{G\beta\rho}}n_{\mathrm{I}},
\end{align}
where $n_{\mathrm{I}}\sim\mathcal{N}(0, \sigma^2_{\mathrm{n}}/2)$ is the real part of the receiver noise.

\section{Transmit Power Control in AirComp FL for Differential Privacy}
\label{sec:tp_design}
We design the transmit power control for an AirComp-based FL to achieve differential privacy.
Prior to providing details, we note the definition of differential privacy.
Differential privacy is defined to characterize randomized mechanisms that output a desired value computed over a dataset perturbed with a random noise.
The definition of differential privacy is as follows:
\begin{dfn}
	(Differential Privacy\cite{dwork2014algorithmic})
	A randomized mechanism $\mathcal{M}$ is $(\epsilon, \delta)$-differentially private if for any pair of adjacent dataset \footnote{In the FL context, the notion of ``adjacent'' means that one dataset can be formed by adding or removing all examples associated with a single client from the other dataset\cite{mcmahan2017learning}.}$d$ and $d'$ and any sort of possible outcome $\mathcal{S}\subseteq\mathrm{Range}(\mathcal{M})$, we obtain
	\begin{align}
		\label{eq:DP}
		\mathbb{P}(\mathcal{M}(d)\in \mathcal{S}) \leq \mathrm{e}^{\epsilon}\mathbb{P}(\mathcal{M}(d')\in \mathcal{S}) + \delta.
	\end{align}
\end{dfn}
\noindent In the above definition, $\mathrm{Range}(\mathcal{M})$ is the set of all possible outcomes of $\mathcal{M}$.
Note that the values $\epsilon$ and $\delta$ represent the similarity in the distribution of the outcomes of the randomized mechanisms performed over the datasets $d$ and $d'$, and are interpreted as a privacy level\cite{dwork2014algorithmic}.
Lower $\epsilon$ and $\delta$ account for a higher privacy level.

The computation in \eqref{eq:computation_result_original} is also a randomized mechanism, and the noise perturbation can be controlled by the power-scaling factor $\rho$.
Hence, an appropriate design of the power-scaling factor realizes the $(\epsilon, \delta)$-differential privacy.
We set the power-scaling factor so that differential privacy with the privacy level $\epsilon$ and $\delta$ can be preserved in the following discussion.
Note that as the power-scaling factor and transmit power are proportional to each other, we use ``setting the power scaling factor'' and ``transmit power control'' interchangeably hereinafter.

\vspace{-.5em}
\subsection{Update Clipping}
In the computation discussed above, there is no limit to the \textit{query sensitivity}, which is crucial to preserve the differential privacy\cite{dwork2014algorithmic}.
The sensitivity is generally referred to as the maximum effect of the attendance of one entity holding data on the desired computation results.
In the case of the computation in \eqref{eq:computation_result_original}, the query sensitivity is given by\cite{mcmahan2017learning}
	\begin{align}
		\label{eq:sensitivity}
		\max_{k\in\mathcal{I}}\left|\sum_{i\in\mathcal{I}}s_i - \sum_{i\in\mathcal{I}\setminus\{k\}}s_i\right|.
	\end{align}

	A solution to this issue is the application of \textit{update clipping}, which was proposed in \cite{mcmahan2017learning} for non-AirComp FL.
	In update clipping, the weighted local update in each client $w_i\Delta_i/\sum_{i\in\mathcal{I}} w_i$  is bounded by a threshold $S > 0$, as follows:
\begin{align}
	\label{eq:clipping}
	s_i = \frac{w_i\Delta_{i}}{\sum_{i \in \mathcal{I}} w_i} \min\left\{1, \frac{\sum_{i \in \mathcal{I}}w_i}{w_i|\Delta_{i}|}S\right\}.
\end{align}
In update clipping, the query sensitivity is also bounded by $S$, as follows:
The sensitivity in \eqref{eq:sensitivity} is equivalent to $\max_{k}|s_k|$, which is obviously less or equal to $S$ in the clipping in \eqref{eq:clipping}.

\subsection{Power-Scaling Factor Constraint for Differential Privacy}
Based on the aforementioned update clipping, we adaptively choose the power-scaling factor $\rho$ with the objective of the computation in \eqref{eq:computation_result_original} being $(\epsilon, \delta)$-differentially private with the target privacy level of $\epsilon$ and $\delta$.
First, we provide the condition that the computation in \eqref{eq:computation_result_original} is $(\epsilon, \delta)$-differentially private, as follows:\vspace{-.5em}
\begin{lemma}
		\label{lemma:diff_priv_condition}
		(Power-scaling factor constraint for differential privacy) Given the target privacy level $\epsilon$ and $\delta$, the computation in \eqref{eq:computation_result_original} is $(\epsilon, \delta)$-differentially private if
		\begin{align}
			\label{eq:diff_priv_condition}
			\rho\leq \frac{\sigma^2_{\mathrm{n}}}{4\,G\beta}\frac{\epsilon^2}{S^2\,\ln(1.25/\delta)}.
		\end{align}
	\end{lemma}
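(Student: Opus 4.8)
The plan is to recognize the computation in \eqref{eq:computation_result_original} as an instance of the classical \emph{Gaussian mechanism} from differential privacy, and then to translate its standard noise-calibration condition into a constraint on the power-scaling factor $\rho$. First I would observe that \eqref{eq:computation_result_original} returns the deterministic target $\sum_{i\in\mathcal{I}} s_i$ corrupted by the additive term $\frac{1}{\sqrt{G\beta\rho}}\,n_{\mathrm{I}}$. Since $n_{\mathrm{I}}\sim\mathcal{N}(0,\sigma^2_{\mathrm{n}}/2)$, this perturbation is zero-mean Gaussian with variance
\begin{align}
	\sigma^2 = \frac{1}{G\beta\rho}\cdot\frac{\sigma^2_{\mathrm{n}}}{2} = \frac{\sigma^2_{\mathrm{n}}}{2\,G\beta\rho}.
\end{align}
The quantity being privatized is the scalar map $d\mapsto\sum_{i\in\mathcal{I}} s_i$, whose sensitivity has already been shown to equal $\max_{k}|s_k|$ in \eqref{eq:sensitivity} and is bounded by $S$ under the update clipping of \eqref{eq:clipping}. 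Because each privatized output is one-dimensional (one element per time slot), this bound is simultaneously the $\ell_2$-sensitivity required by the Gaussian mechanism.

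Next I would invoke the standard Gaussian-mechanism guarantee \cite{dwork2014algorithmic}: for a query of $\ell_2$-sensitivity $\Delta$, adding noise $\mathcal{N}(0,\sigma^2)$ yields $(\epsilon,\delta)$-differential privacy whenever
\begin{align}
	\sigma \geq \frac{\Delta\sqrt{2\ln(1.25/\delta)}}{\epsilon},\quad\text{equivalently}\quad \sigma^2 \geq \frac{2\,\Delta^2\ln(1.25/\delta)}{\epsilon^2}.
\end{align}
Substituting $\Delta = S$ and the variance $\sigma^2 = \sigma^2_{\mathrm{n}}/(2G\beta\rho)$ gives the requirement $\sigma^2_{\mathrm{n}}/(2G\beta\rho) \geq 2S^2\ln(1.25/\delta)/\epsilon^2$. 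Since $\rho$ appears in the denominator, isolating it reverses the inequality and reproduces exactly the claimed bound \eqref{eq:diff_priv_condition}. The entire argument is thus a matching of parameters followed by a single algebraic rearrangement, and it is reassuring that the resulting direction ($\rho$ small enough) is consistent with the physical intuition that lower transmit power injects proportionally more noise and hence yields stronger privacy.

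The main obstacle is not the algebra but carefully justifying the two identifications that make the Gaussian mechanism applicable. I would verify that (i) the noise variance correctly carries the factor $1/2$ arising from retaining only the real part $n_{\mathrm{I}}$ of the complex receiver noise, and that (ii) the per-time-slot sensitivity $\max_k|s_k|\le S$ is genuinely the relevant $\ell_2$-sensitivity; the per-element transmission model guarantees each output is scalar, so $\ell_2$-sensitivity coincides with the absolute difference and no cross-element coupling needs to be tracked. A final caveat I would flag is that the cited bound is classically valid for $\epsilon\in(0,1)$, so \eqref{eq:diff_priv_condition} should be read in that regime; the appearance of the constant $1.25$ indicates that this standard regime is exactly what is intended, and outside it one would instead appeal to a tightened analytic Gaussian mechanism.
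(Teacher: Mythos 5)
Your proposal is correct and follows essentially the same route as the paper's own proof: identify the scaled receiver noise as Gaussian with standard deviation $\sigma_{\mathrm{n}}/\sqrt{2G\beta\rho}$, invoke the standard Gaussian-mechanism condition $\sigma \geq S\sqrt{2\ln(1.25/\delta)}/\epsilon$ with sensitivity bounded by $S$ from the update clipping, and solve for $\rho$. Your added care about the factor $1/2$ from taking the real part, the scalar-output sensitivity identification, and the $\epsilon\in(0,1)$ validity regime of the classical bound are all sound refinements that the paper leaves implicit.
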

	\begin{proof}
		From the reproducibility of the Gaussian distribution, the second term in \eqref{eq:computation_result_original} is also a random variable following a Gaussian distribution with standard deviation of ${\sigma_n}/\sqrt{2G\beta\rho}$.
	Given the sensitivity of the query bounded by $S$, the computation in \eqref{eq:computation_result_original} is $(\epsilon, \delta)$-differentially private if\cite{dwork2014algorithmic}:
	\begin{align*}
		\frac{\sigma_n}{\sqrt{2G\beta\rho}} \geq \frac{S\sqrt{2\ln (1.25/\delta)}}{\epsilon}.
	\end{align*}
	By solving for $\rho$, we obtain \eqref{eq:diff_priv_condition}.
	\end{proof}

	By setting the power-scaling factor $\rho$ to satisfy both \eqref{eq:channel_inversion} and \eqref{eq:diff_priv_condition}, we can achieve $(\epsilon, \delta)$-differential privacy in the computation in \eqref{eq:computation_result_original}.
	Note that the scaling factor is set to be the maximum value that satisfies these two constraints, to enhance the received SNR.
	\begin{theorem}
		(Transmit power control preserving differential privacy)
		If we set the power-scaling factor  $\rho$ as $\rho^{\star}$ given by:
		\begin{align}
			\label{eq:priv_channel_inversion}
			\rho^{\star} = P_0\min\!\left\{\min_{i\in\mathcal{I}} \frac{r^{-\alpha}_i |h_i|^2}{|s_i|^2}, \frac{\sigma^2_{\mathrm{n}}}{4\,G\beta P_0}\frac{\epsilon^2}{S^2\,\ln(1.25/\delta)}\right\},
		\end{align}
		then, the computation in \eqref{eq:computation_result_original} is $(\epsilon, \delta)$-differentially private.
		\begin{proof}
			The power-scaling factor $\rho^{\star}$ satisfies the constraint for differential privacy \eqref{eq:diff_priv_condition}; therefore, the computation in \eqref{eq:computation_result_original} with $\rho^{\star}$ is $(\epsilon, \delta)$-differentially private.
		\end{proof}
	\end{theorem}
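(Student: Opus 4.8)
The plan is to treat the proposition as a direct corollary of Lemma~\ref{lemma:diff_priv_condition}, whose hypothesis is exactly the upper bound \eqref{eq:diff_priv_condition} on the power-scaling factor. Under this view the only thing to check is that the explicit choice $\rho^{\star}$ in \eqref{eq:priv_channel_inversion} actually respects that bound; once this is verified, the privacy conclusion is handed to us by the lemma. So the proof is really a feasibility check, not a fresh analytic argument.

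First I would pull the outer factor $P_0$ inside the minimum in \eqref{eq:priv_channel_inversion} and rewrite
\[
\rho^{\star}=\min\!\left\{P_0\min_{i\in\mathcal{I}}\frac{r^{-\alpha}_i|h_i|^2}{|s_i|^2},\ \frac{\sigma^2_{\mathrm{n}}}{4\,G\beta}\frac{\epsilon^2}{S^2\,\ln(1.25/\delta)}\right\}.
\]
The key observation is that the two arguments of this minimum are precisely the right-hand sides of the channel-inversion feasibility bound \eqref{eq:channel_inversion} and of the differential-privacy bound \eqref{eq:diff_priv_condition}, respectively; in particular the $P_0$ in the denominator of the second term in \eqref{eq:priv_channel_inversion} cancels against the factor $P_0$ I have just distributed. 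Hence $\rho^{\star}$ is, by construction, the largest admissible value that meets both constraints simultaneously.

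Next, since $\rho^{\star}$ is the minimum of two nonnegative quantities, it is bounded above by each of them; in particular $\rho^{\star}\leq \frac{\sigma^2_{\mathrm{n}}}{4\,G\beta}\frac{\epsilon^2}{S^2\,\ln(1.25/\delta)}$, which is exactly the hypothesis of Lemma~\ref{lemma:diff_priv_condition}. Invoking that lemma then immediately gives that the computation in \eqref{eq:computation_result_original} is $(\epsilon,\delta)$-differentially private, establishing the claim. As a side remark I would point out that the first argument of the minimum simultaneously enforces $\rho^{\star}\leq \min_{i}r^{-\alpha}_i|h_i|^2 P_0/|s_i|^2$, i.e.\ \eqref{eq:channel_inversion}, so the maximum-transmit-power constraint holds as well, which justifies reading $\rho^{\star}$ as the SNR-maximizing feasible choice rather than merely a privacy-feasible one.

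I do not expect a genuine obstacle here: all the analytic content—the Gaussian-mechanism calibration relating the noise standard deviation, the clipped sensitivity $S$, and the pair $(\epsilon,\delta)$—has already been discharged in Lemma~\ref{lemma:diff_priv_condition}, and the proposition merely repackages it with the power constraint. The one point that warrants a moment's care is the algebraic cancellation of $P_0$ described above, since it is what guarantees that the privacy bound the proposition enforces coincides with \eqref{eq:diff_priv_condition} and carries no hidden dependence on the transmit-power budget $P_0$.
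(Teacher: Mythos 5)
Your proposal is correct and follows the same route as the paper: the paper's proof likewise consists of observing that $\rho^{\star}$ satisfies the power-scaling constraint \eqref{eq:diff_priv_condition} and then invoking Lemma~\ref{lemma:diff_priv_condition}. You simply make explicit the cancellation of $P_0$ and the fact that a minimum is bounded by each of its arguments, which the paper leaves implicit.
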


\subsection{Modified Transmit Power Control Anonymizing Client Updates}

While setting the power-scaling factor as in \eqref{eq:priv_channel_inversion} to perform channel inversion, each client requires the information on the client contribution $s_i$ for all clients, because the power-scaling factor in \eqref{eq:priv_channel_inversion} is set according to $\min_{i}(r^{\alpha}_i |h_i|^2/|s_i|^2)$.
However, this is contrary to our objective of anonymizing the clients' contributions.

Motivated by this, we modify the strategy for setting the scaling factor satisfying the maximum power constraint \eqref{eq:channel_inversion} such that information on $s_i$ is not required.
In update clipping, we have $r^{\alpha}_i|h_i|^2/|s_i|^2\geq r^{\alpha}_i|h_i|^2/S^2$ for all clients, and then $\min_{i}r^{\alpha}_i|h_i|^2/|s_i|^2\geq \min_{i}r^{\alpha}_i|h_i|^2/S^2$.
Hence, if we set $\rho$ such that the following condition is satisfied, the constraint \eqref{eq:channel_inversion} will be satisfied for all clients:
\begin{align}
	\label{eq:priv_channel_inv}
	\rho \leq \min_{i\in\mathcal{I}} \frac{r^{-\alpha}_i |h_i|^2}{S^2}P_0.
\end{align}

From \eqref{eq:priv_channel_inv} and \eqref{eq:diff_priv_condition}, we design the channel inversion strategy such that the information on $s_i$ for all clients is not required.
We now have the following corollary:
\begin{corollary}
	(Transmit power control preserving differential privacy without information on $s_i$ for all clients)
	If we set the power-scaling factor $\rho$ as $\rho^{\star\star}$, given by
	\begin{align}
		\label{eq:priv_channel_inversion2}
		\rho^{\star\star} = \frac{P_0}{S^2}\min\!\left\{\min_{i\in\mathcal{I}} {r^{-\alpha}_i |h_i|^2}, \frac{\sigma^2_{\mathrm{n}}}{4\,G\beta P_0}\frac{\epsilon^2}{\,\ln(1.25/\delta)}\right\},
	\end{align}
	then, the computation in \eqref{eq:computation_result_original} will be $(\epsilon, \delta)$-differentially private.
	
	\begin{proof}
		The power-scaling factor $\rho^{\star\star}$ satisfies the constraint for differential privacy \eqref{eq:diff_priv_condition}; therefore, the computation in \eqref{eq:computation_result_original} with the strategy $\rho^{\star\star}$ will be $(\epsilon, \delta)$-differentially private.	
	\end{proof}
\end{corollary}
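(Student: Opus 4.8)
The plan is to reduce the claim entirely to Lemma~\ref{lemma:diff_priv_condition}, which already establishes that the computation in \eqref{eq:computation_result_original} is $(\epsilon,\delta)$-differentially private whenever the power-scaling factor obeys the bound \eqref{eq:diff_priv_condition}. Thus it suffices to verify that $\rho^{\star\star}$ as defined in \eqref{eq:priv_channel_inversion2} satisfies $\rho^{\star\star}\le \frac{\sigma^2_{\mathrm{n}}}{4G\beta}\frac{\epsilon^2}{S^2\ln(1.25/\delta)}$; everything analytical is then inherited.

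First I would exploit the fact that $\rho^{\star\star}$ is written as the minimum of two quantities, so it is automatically upper-bounded by either argument of the $\min$. Taking the second argument gives $\rho^{\star\star}\le \frac{P_0}{S^2}\cdot\frac{\sigma^2_{\mathrm{n}}}{4G\beta P_0}\frac{\epsilon^2}{\ln(1.25/\delta)}$, in which the factor $P_0$ cancels, leaving exactly the right-hand side of \eqref{eq:diff_priv_condition}. Invoking Lemma~\ref{lemma:diff_priv_condition} then closes the differential-privacy claim directly.

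To argue that this modified scaling factor is also physically admissible, i.e., that it respects the maximum transmit power $P_0$ without requiring knowledge of $s_i$, I would additionally check the first argument of the $\min$ against the relaxed power constraint \eqref{eq:priv_channel_inv}. Here the earlier observation from update clipping, $|s_i|\le S$, is the load-bearing step: it guarantees $\min_{i} r_i^{-\alpha}|h_i|^2/|s_i|^2 \ge \min_{i} r_i^{-\alpha}|h_i|^2/S^2$, so any $\rho$ bounded by $\frac{P_0}{S^2}\min_{i} r_i^{-\alpha}|h_i|^2$ also satisfies the original channel-inversion constraint \eqref{eq:channel_inversion}.

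I do not anticipate a genuine obstacle, since Lemma~\ref{lemma:diff_priv_condition} already carries the analytical content. The only care needed is algebraic bookkeeping—confirming the cancellation of $P_0$ and the correct placement of $S^2$ in the bound—and recognizing that the substitution $|s_i|\mapsto S$ in the power constraint is an inequality in the \emph{safe} direction rather than an equality, which is precisely what the clipping rule \eqref{eq:clipping} supplies.
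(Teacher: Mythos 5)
Your proof is correct and follows essentially the same route as the paper: the differential-privacy claim is reduced to Lemma~\ref{lemma:diff_priv_condition} by noting that $\rho^{\star\star}$, being a minimum whose second argument (after the $P_0$ cancellation) equals the right-hand side of \eqref{eq:diff_priv_condition}, automatically satisfies that constraint. Your additional check of the power admissibility via $|s_i|\le S$ and \eqref{eq:priv_channel_inv} is not part of the paper's one-line proof but merely restates the paper's own derivation in the text preceding the corollary, so nothing is different in substance.
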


\section{Analysis of Tradeoff between SNR and Privacy Level}
\label{sec:snr_analysis}
In the designed transmit power control, there is a tradeoff between the received SNR and privacy level $(\epsilon, \delta)$, and we derive this tradeoff analytically.
As a higher received SNR generally leads to a better performance of FL models\cite{zhu2019broadband}, this analytical result provides an insight into the difficulty of enhancing the model performance learned in the designed AirComp while preserving the differential privacy with a higher privacy level.

The tradeoff between the received SNR and privacy level can be obtained by deriving the SNR bound in the designed transmit power control, as follows:
\begin{theorem}
	(SNR-privacy-level tradeoff)
	The received SNR in the designed transmit power control is bounded by
	\begin{align}
		\label{eq:snr_bound}
		\mathit{SNR} &\leq \frac{G\beta I^2P_0}{\sum_{i \in \mathcal{I}}r^{\alpha}_i\sigma^2_{\mathrm{n}}}
		\left[1 - \exp\!\left({- \frac{\sum_{i \in \mathcal{I}}r^{\alpha}_i\sigma^2_{\mathrm{n}}}{4\,G\beta P_0}\frac{\epsilon^2}{\ln(1.25/\delta)}}\right)\right].
	\end{align}
\end{theorem}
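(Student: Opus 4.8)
The plan is to upper-bound the instantaneous received SNR by a deterministic coefficient times a single random channel term, and then to average over the Rayleigh fading; the averaging step is precisely what generates the bracketed $1-\exp(\cdot)$ factor in \eqref{eq:snr_bound}. First I would pin down the SNR from the received symbol in \eqref{eq:receive_signal2}: the useful signal component is $\sqrt{G\beta\rho}\sum_{i\in\mathcal{I}}s_i$ with power $G\beta\rho\,|\sum_{i\in\mathcal{I}}s_i|^2$, while the receiver noise $n_0$ has power $\sigma_{\mathrm{n}}^2$, so the instantaneous SNR equals $G\beta\rho\,|\sum_{i\in\mathcal{I}}s_i|^2/\sigma_{\mathrm{n}}^2$. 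The update clipping \eqref{eq:clipping} guarantees $|s_i|\le S$, hence $|\sum_{i\in\mathcal{I}}s_i|^2\le I^2S^2$, which is the origin of the $I^2$ in the numerator of \eqref{eq:snr_bound}.

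Next I would insert the power-scaling factor $\rho=\rho^{\star\star}$ from \eqref{eq:priv_channel_inversion2} (the same argument works verbatim for $\rho^{\star}$ in \eqref{eq:priv_channel_inversion}). The factor $S^2$ from the signal bound cancels the $1/S^2$ carried by $\rho^{\star\star}$, leaving
\[
\mathit{SNR}\le \frac{G\beta I^2P_0}{\sigma_{\mathrm{n}}^2}\,
\min\!\left\{\min_{i\in\mathcal{I}}r_i^{-\alpha}|h_i|^2,\;\frac{\sigma_{\mathrm{n}}^2}{4\,G\beta P_0}\frac{\epsilon^2}{\ln(1.25/\delta)}\right\}.
\]

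The crux is taking the expectation of this bound over the fading $\{h_i\}$. Writing $X_i\coloneqq r_i^{-\alpha}|h_i|^2$, each $X_i$ is exponentially distributed with rate $r_i^{\alpha}$ because $|h_i|^2$ is unit-mean exponential for $h_i\sim\mathcal{CN}(0,1)$. The minimum $Y\coloneqq\min_{i}X_i$ of independent exponentials is itself exponential with rate $\Lambda\coloneqq\sum_{i\in\mathcal{I}}r_i^{\alpha}$, since $\mathbb{P}(Y>x)=\prod_i\mathrm{e}^{-r_i^{\alpha}x}=\mathrm{e}^{-\Lambda x}$. Abbreviating the privacy-determined cap by $C$, I would evaluate the truncated mean through the tail integral
\[
\mathbb{E}\!\left[\min\{Y,C\}\right]=\int_0^{C}\mathbb{P}(Y>t)\,\mathrm{d}t=\int_0^{C}\mathrm{e}^{-\Lambda t}\,\mathrm{d}t=\frac{1-\mathrm{e}^{-\Lambda C}}{\Lambda}.
\]
Substituting $\Lambda=\sum_{i}r_i^{\alpha}$ and $C=\sigma_{\mathrm{n}}^2\epsilon^2/(4G\beta P_0\ln(1.25/\delta))$ reproduces \eqref{eq:snr_bound} exactly.

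The main obstacle I anticipate is the fading average rather than the algebra: one must recognize that the per-client channel-inversion constraint makes the effective gain a minimum of independent exponentials, so that the relevant rate is the sum $\sum_{i}r_i^{\alpha}$ (this is what couples all clients' distances into a single term), and then handle the deterministic truncation at the privacy cap $C$ via the tail-integral identity, which is cleaner than integrating the density of the truncated variable directly. Care is also needed to state explicitly that \eqref{eq:snr_bound} is a bound on the fading-averaged SNR, since the clipping step and the substitution of $\rho^{\star\star}$ are deterministic, and only the final expectation introduces the exponential factor.
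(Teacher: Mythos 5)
Your proposal is correct and follows essentially the same route as the paper's appendix: bound $\bigl|\sum_{i\in\mathcal{I}}s_i\bigr|$ by $IS$ via the clipping, recognize $\min_{i}r_i^{-\alpha}|h_i|^2$ as exponential with rate $\sum_{i\in\mathcal{I}}r_i^{\alpha}$, and compute the truncated mean of $\rho^{\star\star}$ at the privacy cap; your only deviation is evaluating $\mathbb{E}[\min\{Y,C\}]$ via the tail-integral identity rather than the paper's piecewise integration against the density, which is a cosmetic simplification yielding the identical expression. (One aside to drop: the claim that the argument works ``verbatim'' for $\rho^{\star}$ in \eqref{eq:priv_channel_inversion} is not quite right, since there the channel term $\min_{i}r_i^{-\alpha}|h_i|^2/|s_i|^2$ is coupled to the $s_i$ and no longer a minimum of exponentials with a deterministic rate.)
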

\begin{proof}
	See Appendix.
\end{proof}
	\begin{remark}
		The SNR bound in \eqref{eq:snr_bound} decreases monotonously as the target privacy level increases (i.e., $\epsilon$ and $\delta$ decrease).
		This indicates the tradeoff between the received SNR and privacy level exactly.

		It is also remarkable that the number of clients $I$ is the most important factor for enhancing the received SNR while preserving a higher privacy level, i.e., smaller values of $\epsilon$ and $\delta$.
		This fact can be expressed using the first-order Taylor approximation of the upper bound for the received SNR in \eqref{eq:snr_bound}.
		Let the upper bound of the received SNR be denoted by $\mathit{SNR}_{\mathrm{b}}$.
		For $\epsilon^2 / \ln(1.25/\delta) \ll 4G\beta P_0/\sum_{i\in\mathcal{I}} r^{\alpha}_{i}\delta^2_{\mathrm{n}}$, the upper bound of the received SNR can be approximated by:
		\begin{align}
			\label{eq:snr_bound_simple}
			\mathit{SNR}_{\mathrm{b}} & = 
			\frac{G\beta I^2P_0}{\sum_{i \in \mathcal{I}}r^{\alpha}_i\sigma^2_{\mathrm{n}}}
			\!\left[1 - \exp\!\left({- \frac{\sum_{i \in \mathcal{I}}r^{\alpha}_i\sigma^2_{\mathrm{n}}}{4\,G\beta P_0}\frac{\epsilon^2}{\ln(1.25/\delta)}}\right)\right]\nonumber\\
			& \approx \frac{G\beta I^2P_0}{\sum_{i \in \mathcal{I}}r^{\alpha}_i\sigma^2_{\mathrm{n}}}\frac{\sum_{i \in \mathcal{I}}r^{\alpha}_i\sigma^2_{\mathrm{n}}}{4\,G\beta P_0}\frac{\epsilon^2}{\ln(1.25/\delta)}
			\nonumber\\& = \frac{I^2}{4}\frac{\epsilon^2}{\ln(1.25/\delta)}.
		\end{align}
		Clearly, the received SNR depends only on the number of clients.
		Hence, the number of clients is a key factor for realizing a higher SNR while preserving a higher privacy level.
	\end{remark}

\section{Numerical Evaluation}
\label{sec:numerical_evaluation}
\subsection{Settings}
\noindent \textbf{Clients and BS setting:} The evaluation is performed under the condition that the clients are placed depart from a BS, at a distance of 100\,m, i.e., $r_i = 100$\,m for $i\in\{1, \dots, I\}$.
Both the clients and BS are equipped with omni-antennas, and hence, $G = 0$\,dBi.
The central frequency is considered to be 5.0\,GHz.
The variance of the receiver noise is $\sigma_{\mathrm{n}}^2=-60$\,dBm.
The path loss exponent is $\alpha=2$.
The path loss for a reference unit distance $\beta$ is $-46$\,dB.
\vspace{.3em}
\\
\noindent \textbf{Data set:} We use the well-known MNIST dataset that consists of 10 categories of hand-written digits ranging from ``0'' to ``9''.
In the MNIST dataset, the total number of training data samples $|\mathcal{D}|$ is 60000.
The training data samples are randomly partitioned into $I$ equal shares, wherein each client holds $60000/I$ training data samples.\vspace{.3em}\\
\noindent \textbf{Training details:} 
The classifier model is implemented using two-layer fully connected neural networks.
The number of units in each layer is 512.
The last fully connected layer is followed by a softmax output layer.
The training is performed by minimizing the loss function that is a  categorical cross-entropy in this setting.
As the optimizer, we use an Adam optimizer\cite{sutskever2013importance} with the learning rate of $1.0\times 10^{-3}$, decaying rate parameters $\beta_1=0.9$ and $\beta_2=0.999$, and batch size 32.
The model aggregation is performed per $20$ epochs for stochastic gradient descent steps.
The clipping threshold $S$ is set as $5.0\times 10^{-5}$.
The privacy level $\delta$ is set as $0.1$.

\subsection{Results}
\begin{figure}[t]
\centering
\includegraphics[width=0.85\columnwidth]{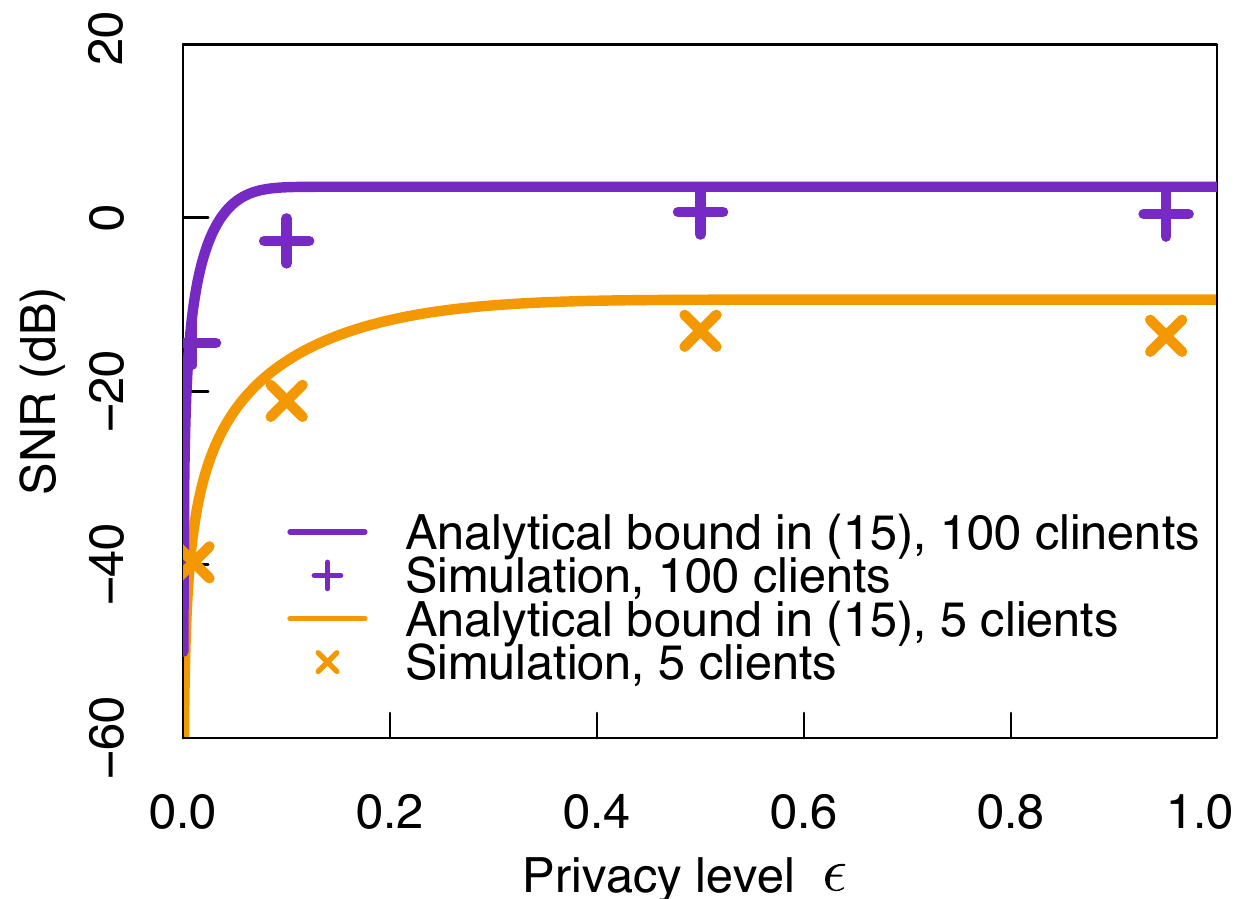}\vspace{-.8em}
\caption{
	Comparison of analytical results of SNR bound in \eqref{eq:snr_bound} and SNR measured from the simulations for various $\epsilon$ with maximum transmit power of $P_0 = 10$\,dBm.
	A lower $\epsilon$ indicates a higher differential privacy level; hence, this figure demonstrates the tradeoff between the received SNR and differential privacy level.
}
\label{fig:theory_simulation}
\vspace{-1.7em}
\end{figure}
\noindent \textbf{Validity of analytical result for received SNR bound:} We validate the analytical results of the upper bound of the received SNR. 
Fig.~\ref{fig:theory_simulation} shows the SNR bound in \eqref{eq:snr_bound} for the number of clients equal to five and 100 and received SNR measured in the simulation for $\epsilon = 0.01, 0.1, 0.5, 0.95$.
The maximum transmit power is set as $P_0 = 10$\,dBm.
Fig.~\ref{fig:theory_simulation} validates the analysis in the sense that the derived bound for the received SNR is indeed an upper bound of the received SNR measured in the simulation.
Moreover, Fig.~\ref{fig:theory_simulation} demonstrates that a lower $\epsilon$ results in a lower received SNR, and this tendency coincides with the analytical upper bound of the received SNR.
This result also validates our statement: there is a tradeoff between the received SNR and differential privacy level.

We validate the analytical results in \eqref{eq:snr_bound_simple}, i.e., for a higher target privacy level, the number of clients participating in the computation has a significant impact on the received SNR, relative to other wireless factors.
Fig.~\ref{fig:theory_simulation_lower_epsilon} shows the received SNR in the simulations, along with the analytical results in \eqref{eq:snr_bound_simple}, under the condition of $\epsilon=0.01$.
As an example, Fig.~\ref{fig:theory_simulation_lower_epsilon} provides the received SNR for the two maximum transmit powers $P_0$, i.e., 10\,dBm and 30\,dBm.
We can see that, as the number of clients increases, the received SNR increases, along with the analytical SNR bound in \eqref{eq:snr_bound_simple}.
When compared to the increase in the number of clients, the increase in the maximum transmit power does not have an impact on the received SNR.
These facts validate our statement: for a higher target privacy level, the number of clients is a key factor that enhances the received SNR.

\begin{figure}[t]
	\centering
	\includegraphics[width=0.85\columnwidth]{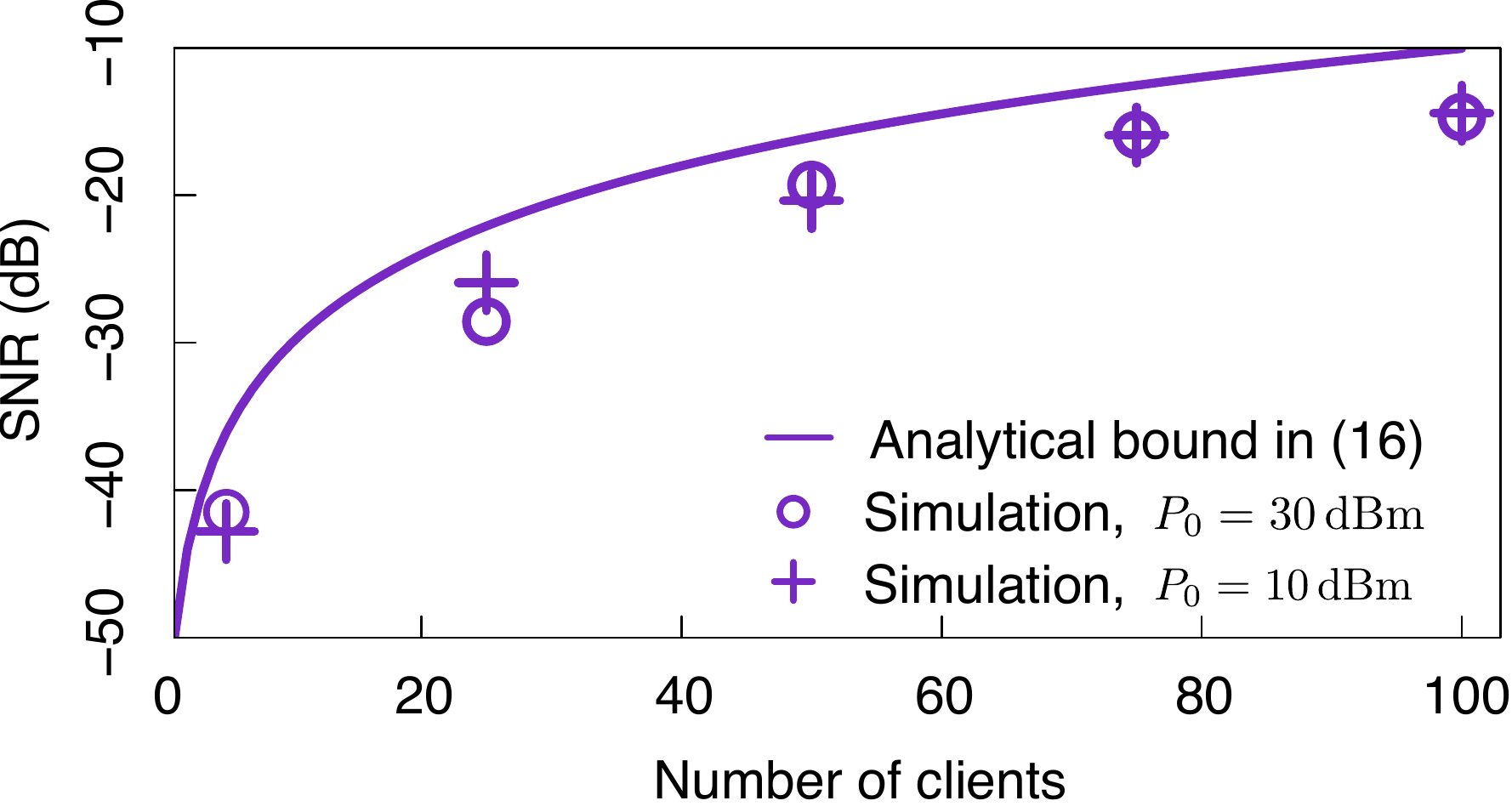}\vspace{-1em}
	\caption{Comparison of analytical results of SNR bound in \eqref{eq:snr_bound_simple} and measured SNR in simulation for $\epsilon = 0.01$.}
	\label{fig:theory_simulation_lower_epsilon}
	\vspace{-.9em}
\end{figure}

\begin{figure}[]
	\centering
	\includegraphics[width=0.75\columnwidth]{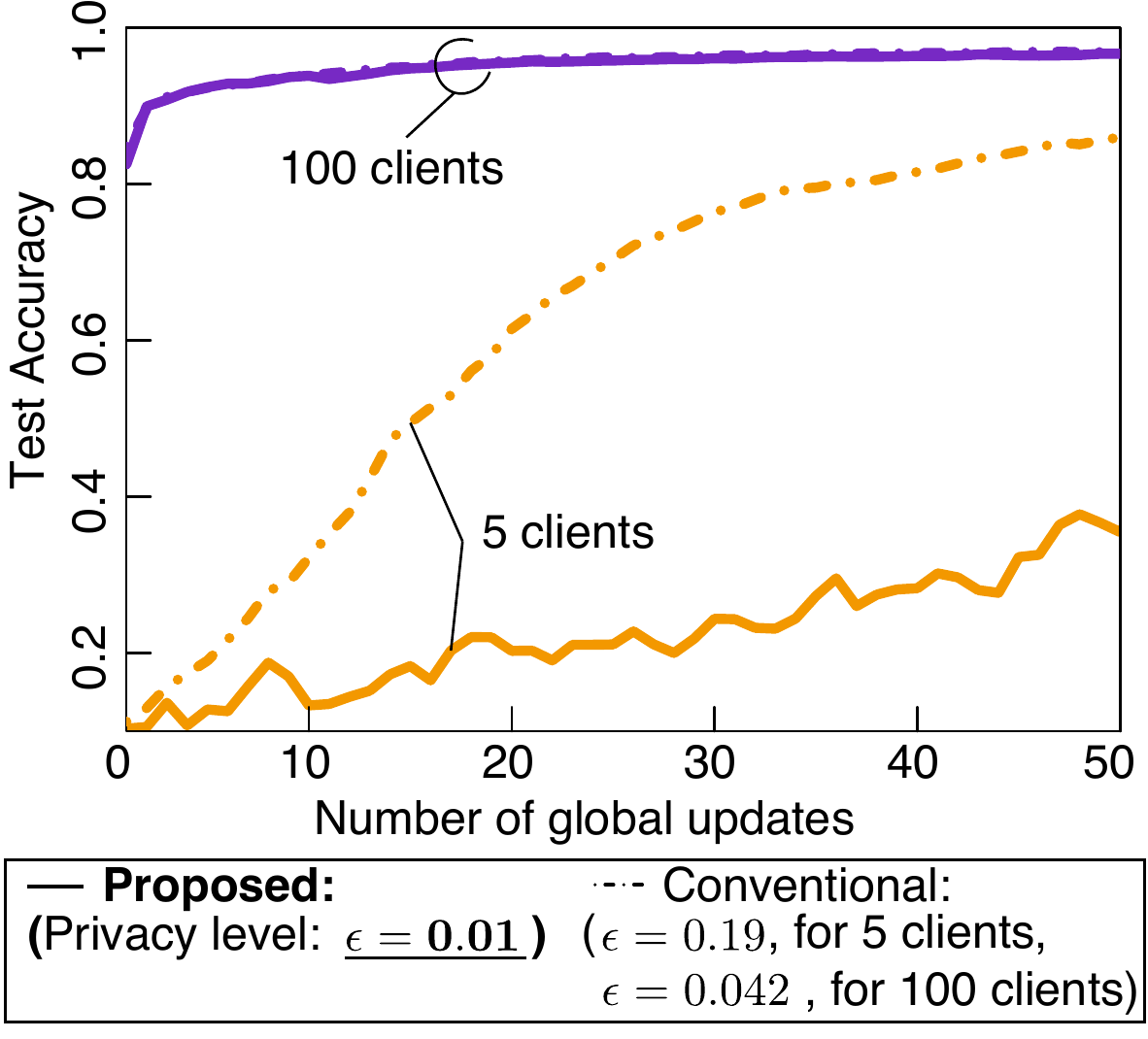}\vspace{-.8em}
	\caption{Training performance for different numbers of clients for target privacy level of $\epsilon=0.01$ and maximum transmit power $P_0 = 10$\,dBm.}
	\label{fig:training_accuracy}
	\vspace{-1.5em}
\end{figure}
\noindent \textbf{Training performance:}
We demonstrate that our designed power control can achieve a higher differential privacy level while exhibiting comparable training performance relative to a conventional power control that sets the transmit power to be maximum in the constraint \eqref{eq:channel_inversion}, i.e., setting the power scaling factor $\rho$ as $\rho_{\mathrm{conv}} \coloneqq \min_{i\in\mathcal{I}}r^{-\alpha}_i |h_i|^2 P_0 / |s_i|^2$\cite{zhu2019broadband}.
Fig.~\ref{fig:training_accuracy} shows the training performance and differential privacy level of our designed power control and conventional power control\footnote{The privacy level in the conventional power control is calculated with Lemma~1 by substituting the average power scaling factor $\mathbb{E}[\rho_{\mathrm{conv}}]$ into \eqref{eq:diff_priv_condition} and then by solving for the privacy level $\epsilon$.} for number of clients equal to $5$ and $100$ in the maximum transmit power $P_0 = 10$\,dBm.
Note that in the designed power control, the target privacy level is set as $\epsilon = 0.01$.
The conventional power control does not achieve the target privacy level in the designed power control, i.e., $\epsilon = 0.01$, which shows the superiority of the designed power control to  conventional one in terms of differential privacy levels.
Regarding the training performance, the designed power control exhibits much poorer performance than the conventional one, owing to the abovementioned tradeoff, i.e., a lower received SNR for achieving the privacy level of $\epsilon=0.01$ when the number of clients is equal to $5$.
Meanwhile, when the number of clients is equal to $100$, the training performance in the designed power control is closer to that in the conventional one, owing to the enhanced received SNR.

\section{Conclusion}
\label{sec:conclusion}
\vspace{-.7em}
We designed AirComp-based FL preserving differential privacy with a desired privacy level to protect local data of clients against privacy inference attacks.
To this end, we considered the use of inherent receiver noises and designed transmit power control to control the level of noise perturbation to the aggregated global model, ensuring a desired privacy level could be achieved.
To gain insight into the challenges to achieve a higher privacy level, we derived a closed-form expression of SNR w.r.t. the privacy level and quantified the tradeoff between these two metrics.
Moreover, the analytical results demonstrate that the number of participating clients is the major factor to enhance the SNR under the tradeoff in particular when a higher privacy level is desired.
These analytical results were verified through numerical evaluations.
The evaluation results also demonstrated the feasibility of the designed power control achieving a performance comparable to that of a conventional power control while achieving a higher differential privacy level.

\vspace{-1em}
\appendix[Proof of Proposition~2]
\vspace{-.5em}
From \eqref{eq:receive_signal2}, the received SNR is given by
\begin{align}
	\label{eq:snr_bound_w_rho}
	\mathit{SNR} &= \frac{\mathbb{E}_{h, s}\!\left[\left|\sqrt{G}\sqrt{\beta}\sum_{i \in \mathcal{I}}\sqrt{\rho^{\star\star}} s_i\right|^2\right]}{\mathbb{E}_{n_0}[|n_0|^2]} \nonumber\\
	&= \frac{G\beta^{-2}\mathbb{E}_{h, s}\!\left[{\rho}^{\star\star}\left|\sum_{i \in \mathcal{I}} s_i\right|^2\right]}{\sigma^2_{\mathrm{n}}}
	\stackrel{(a)}{\leq} \frac{G\beta I^2S^2\mathbb{E}_{h}\!\left[{\rho}^{\star\star}\right]}{\sigma^2_{\mathrm{n}}},
\end{align}
where $h\coloneqq (h_1, \dots, h_I)$ and $s \coloneqq  (s_1, \dots, s_I)$.
The inequality $(a)$ stems from the fact that, in the update clipping in \eqref{eq:clipping}, $\left|\sum_{i\in\mathcal{I}}s_i\right|\leq IS$.

In what follows, we derive $\mathbb{E}_{h}\!\left[{\rho}^{\star\star}\right]$.
To this end, we first prove the following fact: given that $|h_i|^2$ follows an exponential distribution with unit mean, $\min_{i}|h_i|^2r_i^{-{\alpha}}$ follows an exponential distribution with a mean of $1/\sum_{i \in \mathcal{I}}r_i^{\alpha}$.
The proof is obtained by deriving the complementary cumulative distribution function (CCDF) of $\min_{i}|h_i|^2r_i^{-{\alpha}}$.
	Given $x\in\mathbb{R}$, we have
	\begin{align*}
		& \mathbb{P}\!\left(\min_{i\in\mathcal{I}}|h_i|^2r_i^{-{\alpha}}\geq x \right)\\ & = \mathbb{P}\!\left(\bigcap_{i\in\mathcal{I}}|h_i|^2r_i^{-{\alpha}}\geq x \right)
		= \mathbb{P}\!\left(\bigcap_{i\in\mathcal{I}}|h_i|^2\geq x r_i^{{\alpha}}\right)\\
		&= \prod_{i\in\mathcal{I}} \mathbb{P}\!\left(|h_i|^2\geq x r_i^{{\alpha}}\right) 
		= \exp\left(- \sum_{i \in \mathcal{I}}r^{\alpha}_i x\right),
	\end{align*}
	which is exactly the CCDF of an exponential distribution with mean $1/\sum_{i\in\mathcal{I}}r_i^{\alpha}$.

Let $\frac{\sigma^2_{\mathrm{n}}}{4\,G\beta P_0}\frac{\epsilon^2}{\ln(1.25/\delta)}$ be denoted as $g_{\mathrm{th}}$.
From the fact that $g\coloneqq \min_{i\in\mathcal{I}} r^{-\alpha}_i |h_i|^2$ follows the exponential distribution with mean $1/\sum_{i\in\mathcal{I}}r^{\alpha}_i$, we obtain $\mathbb{E}_{h}\!\left[{\rho}^{\star\star}\right]$ as follows:
\begin{align}
	\label{eq:expectation_rho}
	\mathbb{E}_{h}\!\left[{\rho}^{\star\star}\right] 
	&= \int_{0}^{g_{\mathrm{th}}}\,\rho^{\star\star}\, p(g)\,\mathrm{d}g
	+ \int_{g_{\mathrm{th}}}^{\infty}\,\rho^{\star\star}\, p(g)\,\mathrm{d}g\nonumber\\
	&= \int_{0}^{g_{\mathrm{th}}}\,\frac{gP_0}{S^2}\, \sum_{i \in \mathcal{I}}r^{\alpha}_i\exp{\left(- g\sum_{i \in \mathcal{I}}r^{\alpha}_i\right)}\,\mathrm{d}g\nonumber\\
	&\quad + \int_{g_{\mathrm{th}}}^{\infty}\,\frac{g_{\mathrm{th}}P_0}{S^2}\, \sum_{i \in \mathcal{I}}r^{\alpha}_i\exp{\left(- g\sum_{i \in \mathcal{I}}r^{\alpha}_i\right)}\,\mathrm{d}g\nonumber\\
	& = \frac{P_0}{S^2 \sum_{i \in \mathcal{I}}r^{\alpha}_i}\left(1 - \exp\!\left({-g_{\mathrm{th}}\sum_{i \in \mathcal{I}}r^{\alpha}_i}\right)\right).
\end{align} 
Substituting \eqref{eq:expectation_rho} into \eqref{eq:snr_bound_w_rho}, we obtain \eqref{eq:snr_bound}.

\section*{Acknowledgment}
This work was supported in part by JSPS KAKENHI Grant Numbers JP17H03266 and JP18H01442.

\bibliographystyle{IEEEtran}
\bibliography{IEEEabrv,fl}

\end{document}